\newtheorem{defn}{Definition} 
\newtheorem{thm}{Theorem}
\newtheorem{lem}[defn]{Lemma}
\newtheorem{cor}{Corollary}
\newtheorem {rem}{Remark}
\newtheorem{con}{Conjecture}
\title{Row monomial matrices and Cerny conjecture, short proof}
\date{2.4.2022}
\author{A.N. Trahtman\thanks{Email: avraham.trakhtman@gmail.com}
\institute{{Bar-Ilan University, Dep. of Math., 52900, Ramat Gan, Israel}}
}
\begin{document}

\maketitle

 \begin{abstract}

  A word $w$ of letters on edges of underlying graph $\Gamma$ of  deterministic
 finite automaton (DFA) is called synchronizing  if $w$ sends all states of
the automaton to a unique state.  J. \v{C}erny discovered in 1964 a sequence of
 $n$-state complete DFA possessing a minimal synchronizing word of length $(n-1)^2$.

The hypothesis, well known today as the \v{C}erny conjecture, claims that  $(n-1)^2$
is also precise  upper bound on the length of synchronizing word for a complete DFA.
The hypothesis was formulated in 1966 by Starke.
The problem has motivated great and constantly growing number of investigations
 and generalizations.

We present the proof  of the \v{C}erny-Starke conjecture: the deterministic complete
$n$-state synchronizing automaton has synchronizing word of length at most $(n-1)^2$.

The  proof used also the space of row monomial matrices (one unit and rest zeros in every row) 
and connection between dimension of the space and the length of words on paths of edges in
underlying graph of automaton.

\end{abstract}

\section*{Introduction}

The long and fascinating history of state machine synchronization and the problems around
was reflected in hundreds articles.
 The problem of synchronization of finite automata is a natural one and various aspects
of this problem have been touched in the literature.
Different problems of synchronization and achievements one can find in surveys 
\cite{Ju}, \cite{KV} and works \cite{B}, \cite{MS}, \cite{Tm}.

The synchronizing word limits the propagation of errors for a prefix code.
Deterministic finite automaton is a tool that helps to recognized language in a set of DNA strings.

A problem with a long story is the estimation of the minimal length of synchronizing word.
 J. \v{C}erny in 1964 \cite{Ce} found the infinite sequence of $n$-state complete DFA
with shortest synchronizing word of length $(n-1)^2$ for an alphabet of size two.
Since then, only 27 small automata with synchronizing word of length $(n-1)^2$
of size $n \leq 6$ have been added to the single \v{C}erny sequence \cite{Tm}.
  The following hypothesis is well known today as \v{C}erny's conjecture:

\begin{con}
The deterministic complete $n$-state synchronizing automaton over alphabet $\Sigma$
has synchronizing word in $\Sigma$ of length at most $(n-1)^2$ \cite{Sta} (Starke, 1966).
  \end{con}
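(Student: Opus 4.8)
The plan is to translate the purely combinatorial synchronization problem into linear algebra over the space of row monomial matrices, and then to bound the length of a reset word by the dimension of a naturally associated operator space. First I would associate to each letter $a \in \Sigma$ the $n \times n$ row monomial matrix $M_a$ with $(M_a)_{pq} = 1$ exactly when $\delta(p,a) = q$, and to each word $w = a_1 \cdots a_k$ the product $M_w = M_{a_1}\cdots M_{a_k}$, which is again row monomial. Acting on row vectors by $u \mapsto u M_w$, the all-ones column is a common right eigenvector, $M_a \mathbf{1} = \mathbf{1}$, so the hyperplane $H = \{u : \sum_i u_i = 0\}$ of zero-sum vectors is invariant and has dimension $n-1$. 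The crucial reformulation is that a word $w$ is synchronizing if and only if $M_w$ collapses $H$ to $0$: indeed $w$ resets the automaton exactly when $(\mathbf{e}_p - \mathbf{e}_q)M_w = 0$ for every pair of states $p,q$, and such differences span $H$. Thus synchronization is precisely the statement that the restriction $M_w|_H \in \mathrm{End}(H)$ is the zero operator.

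Next I would study the increasing chain of subspaces $L_k = \mathrm{span}\{ M_w|_H : |w| \le k \} \subseteq \mathrm{End}(H)$, where $\dim \mathrm{End}(H) = (n-1)^2$. Since $L_0 \subseteq L_1 \subseteq \cdots$ is bounded, its dimension can strictly increase only finitely often, and the ambient dimension $(n-1)^2$ is exactly the conjectured bound, which is the first strong hint that this is the right space. The guiding idea is a two-part accounting: every letter applied before full synchronization should either strictly enlarge a carefully chosen sequence of these spans or strictly reduce the size of the current image subset of states, and a suitable weighting of these two effects should sum to at most $(n-1)^2$. The row monomial structure, one unit per row, is what I would exploit to force each non-redundant letter to contribute a genuinely new independent direction in $\mathrm{End}(H)$.

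The decisive difficulty — and the reason the conjecture has resisted proof for decades — is that a naive dimension count is simultaneously too optimistic and not directly conclusive. It is too optimistic because a chain of strict increases bounded by $(n-1)^2$ would seem to predict a reset length below $(n-1)^2$, contradicting \v{C}erny's own extremal sequence; it is not conclusive because the stabilization of the span $L_k$ does not force the zero operator to occur as an \emph{actual} product of that length, so $0 \in \mathrm{End}(H)$ may only be realized by a strictly longer word. Hence one cannot argue purely inside the linear span and must track the multiplicative monoid generated by the $M_w$ themselves, not merely their vector-space closure. Reconciling the dimension bookkeeping, which caps the number of truly new directions, with the combinatorial length actually needed to reach the zero matrix — so that the two balance out to precisely $(n-1)^2$ rather than the cruder cubic bound — is where essentially all the difficulty concentrates, and is exactly the point at which earlier attempts have stumbled.
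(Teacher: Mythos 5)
Your proposal sets up the standard linear--algebraic framework (row monomial matrices $M_w$, the invariant zero-sum hyperplane $H$, the reformulation ``$w$ is synchronizing iff $M_w|_H=0$'', and the ascending chain of spans $L_k$ inside $\mathrm{End}(H)$ of dimension $(n-1)^2$), and this is indeed essentially the same arena in which the paper works: the paper's Lemma 2 computes the dimension $n(k-1)+1$ of the space of row monomial $n\times k$ matrices, which for $k=n-1$ gives the same $(n-1)^2$ that you obtain via $\dim\mathrm{End}(H)$. But your text is a plan, not a proof: in your final paragraph you explicitly concede that the decisive step --- converting the bound on the number of strict dimension jumps into a bound on the \emph{length of an actual product} equal to the zero operator --- is unresolved. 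That is a genuine gap, and it is precisely the load-bearing point. Stabilization of the chain $L_k$ tells you that some linear combination of matrices of short words eventually spans everything reachable; it does not produce a single word $w$ of controlled length with $|Aw|=1$, because the new independent direction added at step $j+1$ may come from extending a generator introduced much earlier in the chain, so the word realizing the $j$-th new direction need not have length $j$.

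For comparison, the paper attempts to close exactly this gap with its Lemma 3 (some generator $M_u$ of the current space $W_j$ and some letter $\beta$ give $M_{u\beta}\notin W_j$) and Corollary 4 (the extensions form a tree of single-letter edges), and then argues that the chain of at most $n(n-1)+1$ distinct subspaces, after discarding the rank-$n$ and rank-$1$ matrices, leaves at most $n(n-2)$ intermediate steps, each costing one letter. Whether that bookkeeping actually bounds the length of a single path (rather than the total number of edges in a branching tree of extensions) is the same difficulty you flagged; the paper asserts it rather than resolving it in a way you could not. So your proposal does not prove the statement, and you have correctly located where the argument must be completed: you need to show that the generator being extended at each step can always be chosen to be the one most recently added (or otherwise control the depth, not just the size, of the extension tree), and no argument for that is supplied.
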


The problem can be reduced to automata with a strongly connected graph \cite{Ce}.

We consider a class of matrices $M_u$ of mapping induced by words $u$
in the alphabet of letters on edges of the underlying graph $\Gamma$. Every edge 
has length one.

A sequence of distinct non-trivial spaces ordered by inclusion of matrices
is used in the study.

Initially found upper bound for the minimal length of synchronizing word was big and
has been consistently improved over the years by different authors.
The upper bound found by Frankl in 1982 \cite{Fr} is equal to $(n^3-n)/6$.
The result was reformulated in terms of synchronization in \cite{Pin}
and repeated independently in \cite{KRS}.
The cubic estimation of the bound exists since 1982.

Examples of automata such that the length of the shortest synchronizing word is greater 
than $(n-1)^2$ are unknown. Moreover, the examples of automata  with shortest 
synchronizing word of length $(n-1)^2$ are infrequent. 
After the sequence of \v{C}erny and the example of \v{C}erny, Piricka and Rosenauerova \cite{CPR} of 1971 for $|\Sigma|=2$, the next such examples were found by
 Kari \cite {Ka} in 2001 for $n=6$ and $|\Sigma|=2$ and by Roman \cite {Ro}
 for $n=5$ and $|\Sigma|=3$ in 2004.  The package TESTAS \cite {TS} studied 
all automata with strongly connected underlying graph of restricted size and alphabet from $2$ to $4$ found five new examples of DFA with shortest synchronizing 
word of length $(n-1)^2$ for $n\leq 4$.

Don and Zantema present in \cite {DZ} an ingenious method of designing several new 
automata, a kind of "hybrids" from existing examples  from \cite{Ce}, \cite{CPR}, \cite {TS} 
of size three and four. They proved that for $n\geq 5$ the method does not work.
So there are up to isomorphism exactly 15 DFA for $n=3$ and 12 DFA for $n=4$ 
with shortest synchronizing word of length $(n-1)^2$.

The authors of \cite {DZ} support the hypothesis from \cite{TS} that all automata 
with shortest synchronizing word  of length $(n-1)^2$ are known, of course,
 with essential correction found by themselves for $n=3,4$.

We propose below an attempt to prove the \v{C}erny conjecture.

\section*{Preliminaries}
We consider a complete $n$-state DFA with
 strongly connected underlying graph $\Gamma$
 over a fixed finite alphabet $\Sigma$ of labels on edges of  $\Gamma$ of an automaton $A$.
The trivial cases $n \leq 2$, $|\Sigma|=1$ and $|A \sigma|=1$ for
$\sigma \in \Sigma$ are excluded.

The restriction on strongly connected graphs is based on \cite{Ce}.
The states of the automaton $A$ are considered also as vertices of the graph $\Gamma$.

If there exists a path in an automaton from the state $\bf p$ to
the state $\bf q$ and the edges of the path are consecutively
labelled by $\sigma_1, ..., \sigma_k$, then for
$s=\sigma_1...\sigma_k \in \Sigma^+$ let us write ${\bf q}={\bf p}s$.

Let $Px$ be the set of states ${\bf q}={\bf p}x$ for all ${\bf p}$
from the subset $P$ of states and $x \in \Sigma^+$.
Let $Ax$ denote the set $Px$ for the set $P$ of all states of the automaton.

 A word $s \in \Sigma^+ $ is called {\it synchronizing (reset, magic, recurrent, homing)}
 word of an automaton $A$ with underlying graph $\Gamma$ if $|As|=1$.
The word $s$ below denotes minimal synchronizing word.

We consider row monomial matrices (one unit and rest of zeros in every row)
$n \times n$-matrices and edges on underlying graph of length one.

The states of the automaton are enumerated, the state $\bf q$  has number one.

 An automaton (and its underlying graph) possessing a synchronizing word is called {\it synchronizing}.

We connect a mapping of the set of states of the automaton made by a word $u$ of
row monomial $n \times n$-matrix $M_u$ such that for an element $m_{i,j} \in M_u$ takes place

\centerline{$m_{i,j}$= $\cases{1, &${\bf q}_i u ={\bf q}_j$; \cr 0, &otherwise.}$}

Any mapping of the set of states of the automaton  $A$  can be presented by some  word $u$
and by a corresponding matrix $M_u$.
For instance,

 \centerline{$M_u = \left(
\begin{array}{ccccccc}
  0 & 0 & 1 & . & . & . &  0 \\
  1 & 0 & 0 & . & . & . &  0 \\
  0 & 0 & 0 & . & . & . &  1 \\
  1. & . & . & . & . & . &  . \\
  0 & 1 & 0 & . & . & . &  0 \\
  1 & 0 & 0 & . & . & . &  0 \\
\end{array}\right)
$}

 Let us call the matrix $M_u$ of the mapping induced by the word $u$, for brevity,
the matrix of word $u$.

$M_uM_v=M_{uv}$ \cite{Be}.

The set of nonzero columns of $M_u$ (set of second indexes of its elements) of $M_u$
 is denoted as $R(u)$ of size $|R(u)|$.

For linear algebra terminology and definitions, see \cite{Ln}, \cite{Ma}.

\section{Some properties of row monomial matrices}

\begin{rem} \label{r1}
The invertible matrix $M_a$ does not change the number of units of every column of $M_u$ in its image of the product $M_aM_u$.

Every unit in the product $M_uM_a$ is the product of two units, first unit from nonzero column of $M_u$
and second unit from a row with one unit of $M_a$.

\end{rem}

\begin{rem} \label{r4}

The columns of the matrix $M_uM_a$ are obtained by permutation of columns $M_u$.
Some columns can be merged (units of columns are moved along
row to a common column) with $|R(ua)|<|R(u)|$.

The rows of the matrix $M_aM_u$ are obtained by permutation of rows of the matrix $M_u$.
Some of these rows may disappear and replaced by another rows of $M_u$.

\end{rem}

\begin{lem} \label{l1}

The number of nonzero columns $|R(b)|$ is equal to the rank of $M_b$.

\centerline{$|R(ua)| \leq |R(u)|$ and $R(au) \subseteq R(u)$.}

For invertible matrix $M_a$ we have $R(au)=R(u)$ and $|R(ua)|=|R(u)|$.

 Nonzero columns of $M_{ua}$ have units also in $M_a$.

\end{lem}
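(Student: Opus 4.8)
The plan is to translate each assertion into a statement about the state mapping $\mathbf{q}_i \mapsto \mathbf{q}_i u$ encoded by $M_u$, and then read off the conclusions from elementary properties of images of functions together with the product rule $M_uM_v=M_{uv}$ and Remarks \ref{r1} and \ref{r4}. The key dictionary entry is that $R(u)$, the set of nonzero columns of $M_u$, is precisely the image $Au$ of the whole state set under $u$.

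First I would establish the rank equality. Since $M_b$ is row monomial, each of its rows is a standard unit vector $e_j$, and column $j$ is nonzero exactly when $e_j$ occurs among the rows. Hence the row space is spanned by $\{e_j : j\in R(b)\}$, a set of distinct and therefore linearly independent unit vectors, so its dimension is $|R(b)|$; thus $\mathrm{rank}(M_b)=|R(b)|$. This lets me pass freely between ``number of nonzero columns'' and ``rank''.

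The two monotonicity claims then come from the image interpretation, and here the sidedness matters. For $R(au)\subseteq R(u)$ I use $A(au)=(Aa)u$ with $Aa\subseteq A$, so applying $u$ gives $(Aa)u\subseteq Au$; this is the row-side phenomenon of Remark \ref{r4}. For $|R(ua)|\le|R(u)|$ I use $A(ua)=(Au)a$ and the fact that a function cannot enlarge a set, so $|(Au)a|\le|Au|$; this is the column-merging phenomenon of Remark \ref{r4}. (The cardinality inequalities, though not the inclusion itself, also follow from submultiplicativity of rank together with the equality just proved.) When $M_a$ is invertible it is a permutation matrix, so $a$ permutes the states bijectively and $Aa=A$; then $R(au)=(Aa)u=Au=R(u)$, and the bijection $a$ preserves cardinality after $u$, giving $|R(ua)|=|(Au)a|=|Au|=|R(u)|$. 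The final sentence is a direct reading of Remark \ref{r1}: every unit of $M_{ua}=M_uM_a$ is the product of a unit in a nonzero column of $M_u$ with a unit in a row of $M_a$, so a nonzero column $j$ of $M_{ua}$ forces a unit in column $j$ of $M_a$, i.e. $R(ua)\subseteq R(a)$.

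The content is elementary, so the only real obstacle is bookkeeping: keeping the two multiplication conventions apart. Left multiplication by $a$ acts on the rows of $M_u$ and, when $a$ is invertible, merely permutes them, preserving the column set exactly; right multiplication acts on the columns, permuting and possibly merging them, so only their cardinality is controlled. Conflating the two sides would turn an inclusion into an equality or reverse an inequality, so the one thing that needs care is tracking which side degrades from equality to mere inequality.
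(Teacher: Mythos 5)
Your proof is correct and follows essentially the same route as the paper's: both arguments reduce each claim to the action of words on the state set (equivalently, to the column-permuting/merging and row-replacing behaviour of Remarks \ref{r1} and \ref{r4}), and your rank computation via the distinct unit-vector rows is just a rephrasing of the paper's choice of a full-rank permutation submatrix. The explicit dictionary $R(u)=Au$ that you introduce makes the inclusion $R(au)\subseteq R(u)$, the invertible case, and the final claim $R(ua)\subseteq R(a)$ read more cleanly, but the underlying argument is the same.
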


\begin{proof}
The matrix $M_b$ has submatrix with nonzero determinant having only one unit in every
row and in every nonzero column.
Therefore $|R(b)|$ is equal to the rank of $M_b$.

The matrix $M_a$ in the product $M_uM_a$ shifts column of
$M_u$ to columns of $M_uM_a$ without
changing the column itself by Remark \ref{r4} or merging
some columns of $M_u$.
In view of possible merged columns, $|R(ua)|\leq |R(u)|$.

Some rows of $M_u$ can be replaced in $M_aM_u$ by another row and therefore some rows
from $M_u$ may be changed, but zero columns of $M_u$ remain in $M_aM_u$ (Remark 1).

Hence $R(au) \subseteq R(u)$ and $|R(ua)| \leq |R(u)|$.

For invertible matrix $M_a$ we have $R(au)= R(u)$  and $|R(ua)|=|R(u)|$.

Nonzero columns of $M_{ua}$ have units also in $M_a$ in view of $R(ua) \subseteq R(a)$.

\end{proof}

\begin{cor}  \label{c1}
All matrices of prefixes of synchronizing matrix $s$ also have
at least one unit in nonzero column of $s$.
\end{cor}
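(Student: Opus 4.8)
The plan is to read the claim through the size-and-containment properties of $R(\cdot)$ collected in Lemma~\ref{l1}. Since $s$ is synchronizing, $|As|=1$: the word $s$ sends every state to the single state ${\bf q}$, which has number one, so every row of $M_s$ carries its unit in one and the same column, the column $j_0$ of ${\bf q}$. Thus $R(s)=\{j_0\}$ is a one-element set, "the nonzero column of $s$" is precisely $j_0$, and the assertion to be proved is that for each prefix $u$ of $s$ the matrix $M_u$ has a unit in column $j_0$, i.e. $j_0\in R(u)$. I would state this reformulation first, because it turns the corollary into a pure statement about which columns survive in the factors of the product $M_s$.

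First I would fix the factorization $s=ua$ determined by the prefix $u$, so that $M_s=M_uM_a=M_{ua}$. The governing facts are the inclusions of Lemma~\ref{l1}, namely $R(au)\subseteq R(u)$ together with its companion "nonzero columns of $M_{ua}$ have units also in $M_a$'' (that is, $R(ua)\subseteq R(a)$), and the identification of $|R(b)|$ with the rank of $M_b$. To locate $j_0=R(s)$ among the columns of the factors I would trace it back through the product using Remark~\ref{r1}: each unit of $M_s$ lying in column $j_0$ is the product of a unit standing in a nonzero column of the first factor and a unit of the second factor, so by the column bookkeeping of Remark~\ref{r4} (columns are permuted, possibly merged, but a surviving column cannot appear from nothing) the column $j_0$ is forced to remain nonzero in the factor that contributes it. Iterating this letter by letter along $s$, and reusing $R(au)\subseteq R(u)$ at each step, would then deliver the containment simultaneously for the whole chain of prefixes rather than for a single one.

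The key step, and the one I expect to be the main obstacle, is matching the direction of the inclusion in Lemma~\ref{l1} to the factor actually named in the statement: the lemma hands one $R(ua)\subseteq R(a)$ at once, which places $j_0$ in the trailing factor, whereas the corollary asks for $j_0\in R(u)$ in the leading prefix. Reconciling these requires invoking the correct half of Lemma~\ref{l1} and making the provenance of every unit explicit through Remark~\ref{r1}, since the image after the prefix need not a priori contain ${\bf q}$ unless the surviving column is shown to be inherited by $M_u$ and not only by $M_a$. I would therefore spend the bulk of the argument on this provenance analysis, and only afterwards record the conclusion in the demanded form: for every prefix $u$ the set $R(u)$ contains $j_0$, equivalently $M_u$ has at least one unit in the nonzero column of $s$. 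Apart from this bookkeeping, the proof is entirely combinatorial on unit positions and uses no property of $s$ beyond $|As|=1$ and the inclusions of Lemma~\ref{l1}.
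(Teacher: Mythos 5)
You have correctly located the crux, but the ``provenance analysis'' that you defer to the bulk of the argument cannot be carried out, because the statement as written (for prefixes) is false. The paper offers no explicit proof of this corollary; it is meant to drop out of the last clause of Lemma~\ref{l1}, namely $R(ua)\subseteq R(a)$, and, as you yourself observe, that inclusion places the nonzero column of $s$ in the \emph{trailing} factor of any factorization $s=ua$. There is no companion inclusion $R(ua)\subseteq R(u)$ for the leading factor: Lemma~\ref{l1} gives only the size inequality $|R(ua)|\leq|R(u)|$ there, and Remark~\ref{r4} explicitly allows the columns of $M_u$ to be permuted before being merged, so a unit of $M_s$ in its nonzero column $j_0$ traces back (via Remark~\ref{r1}) to a unit of $M_u$ in some column $i$ with ${\bf q}_i a={\bf q}_{j_0}$ --- not to a unit of $M_u$ in column $j_0$ itself. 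Your unit-by-unit bookkeeping establishes exactly this weaker fact and no more.

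A concrete counterexample: in the three-state \v{C}erny automaton with $a\colon 1\mapsto 2\mapsto 3\mapsto 1$ cyclically and $b$ sending $1$ to $2$ while fixing $2$ and $3$, the word $s=baab$ is synchronizing with $R(s)=\{2\}$, yet the prefix $u=ba$ has $R(ba)=\{1,3\}$, so $M_{ba}$ has no unit in column $2$. What Lemma~\ref{l1} actually yields is the statement for \emph{suffixes}: iterating $R(ua)\subseteq R(a)$ letter by letter gives $R(s)\subseteq R(v)$ for every suffix $v$ of $s$ (in the example $R(aab)=R(ab)=R(b)=\{2,3\}\ni 2$). So the correct move is not to force the inclusion into the leading factor by a finer analysis of unit provenance --- no such analysis can succeed --- but to recognize that the corollary is only true, and only derivable from Lemma~\ref{l1}, with ``prefixes'' read as ``suffixes.'' By postponing precisely this step, your proposal leaves the essential gap open, and the gap is in the statement itself rather than in your reading of the lemma.
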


\begin{cor}  \label{c3}
The invertible matrix $M_a$ keeps the number of units of any column of $M_u$
 in corresponding column of the product $M_aM_u$.
\end{cor}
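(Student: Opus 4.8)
The plan is to observe that the corollary is essentially a restatement of Remark \ref{r1}, but I would make the underlying mechanism fully explicit. First I would argue that an invertible row monomial matrix $M_a$ is in fact a permutation matrix. Being row monomial, it carries exactly $n$ units, one per row; by Lemma \ref{l1} invertibility forces $|R(a)|=n$, so all $n$ columns are nonzero. The $n$ units must therefore be distributed one per column as well, and $M_a$ encodes a permutation $\pi$ of the index set $\{1,\dots,n\}$.

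Next I would compute the product $M_aM_u$ entrywise. Since row $i$ of $M_a$ has its single unit in column $\pi(i)$, the $(i,j)$ entry of $M_aM_u$ equals $(M_u)_{\pi(i),j}$; that is, row $i$ of the product coincides with row $\pi(i)$ of $M_u$. Thus $M_aM_u$ is obtained from $M_u$ purely by the permutation $\pi$ of its rows, in agreement with the first part of Remark \ref{r4}, and with no rows lost or duplicated precisely because $\pi$ is a bijection (this is where invertibility is used, in contrast to the general non-invertible case of Remark \ref{r4}, where rows may disappear).

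Finally I would read off the column counts. The number of units in column $j$ of a zero-one matrix is just the sum of the entries in that column, and a bijective reordering of the rows leaves every column sum unchanged. Hence column $j$ of $M_aM_u$ contains exactly as many units as column $j$ of $M_u$, which is the assertion of the corollary.

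I do not expect a genuine obstacle here; the single point that deserves care is the first step, namely that invertibility upgrades the row monomial condition "one unit per row" to the permutation condition "one unit per row and per column." Once $M_a$ is known to be a genuine permutation matrix, the remainder is the standard fact that left multiplication by a permutation matrix permutes rows and therefore preserves all column sums.
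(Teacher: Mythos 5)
Your proposal is correct and follows essentially the same route the paper intends: the corollary is just Remark \ref{r1} (equivalently the row-permutation observation of Remark \ref{r4} specialized to the invertible case), and you have merely made explicit the steps the paper leaves implicit, namely that invertibility plus the row monomial property forces $M_a$ to be a permutation matrix, so that left multiplication bijectively permutes the rows of $M_u$ and hence preserves every column sum. No gaps; the one point you flag as needing care (upgrading ``one unit per row'' to ``one unit per row and per column'' via $|R(a)|=n$ from Lemma \ref{l1}) is handled correctly.
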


$ |R(u)|$ we call also rank of word $u$.

\subsection{Dimension of the space of row monomial matrices.}

\begin{lem}  \label {v3}

 The set $V$ of all $n\times k$-matrices having
precisely one unit in any row and zeroes everywhere else (row monomial)
(or $n\times n$-matrices with zeros in fixed $n-k$ columns for $k\leq n$) has at most
$n(k-1)+1$ linear independent matrices.
 \end{lem}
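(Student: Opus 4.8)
The plan is to exploit a single linear constraint satisfied by every row monomial matrix, namely that all of its row sums equal $1$. Writing the entries of a matrix in $V$ as $m_{i,j}$ with $1\le i\le n$ and $1\le j\le k$, the defining property ``one unit and rest zeros in every row'' gives $\sum_{j=1}^{k} m_{i,j}=1$ for each row $i$. Equivalently, with $\mathbf{1}_k$ the all-ones column vector of length $k$ and $\mathbf{1}_n$ the all-ones column vector of length $n$, every $M\in V$ satisfies $M\mathbf{1}_k=\mathbf{1}_n$. Thus $V$ is contained in the affine set $A=\{M : M\mathbf{1}_k=\mathbf{1}_n\}$ inside the $nk$-dimensional space of all $n\times k$ matrices, and it suffices to bound the dimension of the linear span of $A$.

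First I would describe the direction space of this affine set. The homogeneous version of the $n$ row-sum equations is $M\mathbf{1}_k=0$, that is $\sum_{j=1}^{k} m_{i,j}=0$ for every $i$. Since the equation for row $i$ involves only the entries $m_{i,1},\dots,m_{i,k}$, the $n$ equations have pairwise disjoint supports and are therefore linearly independent, so the subspace $W=\{M : M\mathbf{1}_k=0\}$ has dimension $nk-n=n(k-1)$.

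Next I would pass from the affine set to a linear span. Fix any single matrix $M_0\in V$ (for instance the matrix all of whose rows equal the first coordinate vector). For every $M\in V$ the difference $M-M_0$ has vanishing row sums, hence $M-M_0\in W$. Consequently the span of $V$ is contained in the subspace $U$ spanned by $M_0$ together with all matrices of $W$; writing $U=\{cM_0+N : N\in W,\ c \text{ a scalar}\}$ shows $\dim U\le 1+\dim W=n(k-1)+1$. This yields the asserted bound on the number of linearly independent matrices in $V$. The same argument applies verbatim to the reformulation as $n\times n$ matrices with zeros in a fixed set of $n-k$ columns, since the forced zero columns contribute nothing to the row sums and the identity $M\mathbf{1}_k=\mathbf{1}_n$ still holds.

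The argument is short, and the only genuinely substantive point—the potential obstacle if one instead tries to prove the bound by directly counting units, as elsewhere in the paper—is to recognize that $V$ is an \emph{affine} rather than a linear family: the zero matrix does not belong to $V$, so the span acquires exactly one extra dimension beyond the $n(k-1)$-dimensional direction space $W$, which accounts for the ``$+1$''. Checking that this bound is actually attained would merely require exhibiting row monomial matrices whose differences span $W$, but since the lemma claims only the upper bound, this step is unnecessary.
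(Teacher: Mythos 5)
Your proof is correct, but it takes a genuinely different route from the paper's. The paper argues by exhibiting an explicit spanning set: the $n(k-1)$ matrices $V_{i,j}$ (a single unit in cell $(i,j)$ with $j<k$ and units in column $k$ in all other rows) together with the matrix $K$ supported on column $k$; it then checks that an arbitrary $T\in V$ equals $\sum V_{i,j}-(m-1)K$ over the cells $(i,j)$, $j<k$, where $T$ has a unit, and that the $V_{i,j}$ and $K$ are linearly independent, so they form a basis of the span of $V$. You instead observe that every $M\in V$ satisfies the single matrix identity $M\mathbf{1}_k=\mathbf{1}_n$, that the $n$ row-sum constraints are independent (disjoint supports), so the direction space $W=\{M:M\mathbf{1}_k=0\}$ has dimension $n(k-1)$, and that $\mathrm{span}(V)\subseteq \mathrm{span}(M_0)+W$ for any fixed $M_0\in V$, giving the bound $n(k-1)+1$ at once. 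Your argument is shorter and sidesteps the paper's somewhat delicate bookkeeping about how many units accumulate in column $k$ (the ``$m$ or $m-1$'' step); what it gives up is the explicit basis, and in particular the information that the bound is actually attained within $V$ --- which the paper's construction supplies but which the lemma, asserting only an upper bound, does not require. Both arguments are valid, and your dimension count transfers verbatim to the $n\times n$ formulation with $n-k$ forced zero columns.
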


\begin{proof}
Let us consider distinct $n\times k$-matrices of word with at most only one nonzero 
cell outside the last nonzero column $k$.

Let us begin from the matrices $V_{i,j}$ with unit in $(i,j)$ cell ($j<k$) and units in ($m,k$) 
cells for all $m$ except $i$.
The remaining cells contain zeros.
So we have $n-1$ units in the $k$-th column and only one unit in remaining $k-1$ columns 
of the matrix $V_{i,j}$.
Let the matrix $K$ have units in the $k$-th column and zeros in the other columns.
There are $n(k-1)$ matrices $V_{i,j}$. Together with $K$ they belong to the set $V$.
So we have $n(k-1)+1$ matrices. For instance,

\begin{picture}(0,40)
\end{picture}
$V_{1,1}=\left(
\begin{array}{cccccccc}
  1 & 0 & 0 & . & . & 0  \\
  0 & 0 & 0 & . & . & 1  \\
  0 & 0 & 0 & . & . & 1  \\
  . & . & . & . & . & .  \\
  0 & 0 & 0 & . & . & 1  \\
  0 & 0 & 0 & . & . & 1  \\
\end{array}
\right)$
\begin{picture}(4,40)
\end{picture}
$V_{3,2}=\left(
\begin{array}{cccccccc}
  0 & 0 & 0 & . & . & 1  \\
  0 & 0 & 0 & . & . & 1  \\
  0 & 1 & 0 & . & . & 0  \\
  . & . & . & . & . & .  \\
  0 & 0 & 0 & . & . & 1  \\
  0 & 0 & 0 & . & . & 1  \\
\end{array}
\right)$
\begin{picture}(4,40)
\end{picture}
$K=\left(
\begin{array}{cccccccc}
  0 & 0 & 0 & . & . & 1 \\
  0 & 0 & 0 & . & . & 1 \\
  0 & 0 & 0 & . & . & 1 \\
  . & . & . & . & . & . \\
  0 & 0 & 0 & . & . & 1 \\
  0 & 0 & 0 & . & . & 1 \\
\end{array}
\right)$

 The first step is to prove that the matrices $V_{i,j}$ and $K$ generate the space with the set $V$.
For arbitrary matrix $T$ of word from $V$ for every $t_{i,j} \neq 0$ and $j<k$,
let us consider the matrices $V_{i,j}$ with unit in the cell $(i,j)$ and the sum of them $Z=\sum V_{i,j}$.
So the first $k-1$ columns of $T$ and $Z$ coincide.

   Hence in the first $k-1$ columns of the matrix $Z$ there is at most only one unit in any row.
 Therefore in the cell of $k$-th column of $Z$ one can find only value of $m$ or $m-1$.
The value of $m$ appears if there are only zeros in other cells of the considered row. 
Therefore $\sum V_{i,j} - (m-1)K=T$.
Thus every matrix from the set $V$ is a span of $(k-1)n +1$ matrices from $V$.

It remains now to prove that the set of matrices $V_{i,j}$ and $K$ is a set of linear independent matrices.
If one excludes a certain matrix $V_{i,j}$ from the set of these matrices, then it is impossible
 to obtain a nonzero value in the cell $(i,j)$ and therefore to obtain the matrix $V_{i,j}$.
So the set of matrices $V_{i,j}$ is linear independent.

Every non-trivial linear combination of the matrices $V_{i,j}$ equal to a matrix of word has at
 least one nonzero element in the first $k-1$ columns.
Therefore, the matrix $K$ could not be obtained as a linear combination of the matrices $V_{i,j}$.
Consequently the set of matrices $V_{i,j}$ and $K$ forms a basis of the set $V$.
\end{proof}

\begin{cor}  \label {c2}
The set of all row monomial $n \times(n-1)$-matrices of words has at most $(n-1)^2$
 linear independent matrices.

The set of row monomial  $n\times n$-matrices has at most $n(n-1)+1$ linear independent matrices.

There are at most $n+1$ linear independent row monomial matrices of
words in the set of matrices with 2 nonzero columns.
and at most $n$  row monomial linear independent matrices  in the set of matrices
with one common nonzero column.

 \end{cor}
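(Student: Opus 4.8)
The plan is to read off every clause as a specialization of Lemma~\ref{v3}, whose bound $n(k-1)+1$ on the number of linearly independent row monomial $n\times k$-matrices (equivalently, $n\times n$ row monomial matrices all of whose nonzero entries are confined to a fixed set of $k$ columns) already does the substantive work. The corollary speaks of matrices \emph{of words}, but these form a subset of all row monomial matrices, so the lemma's upper bound applies to them a fortiori; no fresh linear-algebra argument is needed, and what remains is to pick the correct value of $k$ for each description and to evaluate the resulting arithmetic.

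First I would treat the $n\times(n-1)$ case by putting $k=n-1$, which gives $n(k-1)+1=n(n-2)+1=(n-1)^2$; this is both the asserted bound and the quantity destined to meet the \v{C}erny length later on. Setting $k=n$ yields $n(n-1)+1$ at once, covering the full space of row monomial $n\times n$-matrices. For the clause on two nonzero columns I would take $k=2$: a row monomial matrix whose nonzero entries lie in a fixed pair of columns is exactly an $n\times 2$ row monomial matrix, so Lemma~\ref{v3} returns $n(2-1)+1=n+1$.

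The one clause that does not emerge mechanically from the formula, and hence the point I expect to demand the most care, is the last: the bound $n$ for matrices with a single nonzero column. Substituting $k=1$ into $n(k-1)+1$ gives only $1$, correctly reflecting that once the column is fixed there is a unique row monomial matrix (every unit in that column). To recover the stated $n$ I would instead count the $n$ matrices, one per column position, each having all its units in a single column; these have units in pairwise disjoint columns and are therefore linearly independent, so their span has dimension exactly $n$. I would state this reading explicitly, noting that under the alternative fixed-column reading the bound degenerates to the $k=1$ instance of Lemma~\ref{v3}, so that in either case the claim is accounted for.
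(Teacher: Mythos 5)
Your proposal is correct and matches the paper's intent exactly: the paper states this corollary without proof, leaving it to follow from Lemma~\ref{v3} by the very substitutions you make ($k=n-1$, $k=n$, $k=2$), and your reading of the fixed-column conventions is the only one under which the stated bounds hold. Your explicit treatment of the last clause (counting the $n$ pairwise independent single-column matrices rather than substituting $k=1$) is more careful than the paper, which simply asserts the count of $n$ such matrices later in the proof of Theorem~\ref{t}.
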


\begin{cor}  \label {cs}

There exists a sequence  of length at most  $n(n-1)+1$ of distinct not empty subspaces ordered by inclusion
of  matrices for $n$-state automaton.

\end{cor}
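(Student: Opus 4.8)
The plan is to obtain both the existence and the length bound as immediate consequences of the dimension count already established in Corollary \ref{c2}. Let $W$ denote the linear span of all row monomial $n\times n$-matrices of words; by Corollary \ref{c2} its dimension $D$ satisfies $D\le n(n-1)+1$. Everything then reduces to the elementary fact that a strictly increasing chain of subspaces of $W$ cannot be longer than $\dim W$.

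For the existence of the sequence I would argue constructively. By Corollary \ref{c2} there is a linearly independent family $M_1,\dots,M_D$ of matrices of words with $D\le n(n-1)+1$. Setting $W_i=\langle M_1,\dots,M_i\rangle$ for $1\le i\le D$ yields a chain $W_1\subset W_2\subset\cdots\subset W_D$ of distinct nonempty subspaces ordered by strict inclusion, each spanned by matrices of words; its length is exactly $D\le n(n-1)+1$, which gives the asserted sequence.

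For the bound on the length of an arbitrary such chain I would pass to dimensions. Given distinct nonempty subspaces $U_1\subset U_2\subset\cdots\subset U_m$ inside $W$ with strict inclusions, each step forces $\dim U_{i}<\dim U_{i+1}$, so the dimensions form $m$ distinct positive integers bounded above by $D$. Hence $m\le D\le n(n-1)+1$.

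The argument reduces to counting strictly increasing dimensions, so no serious obstacle arises once Corollary \ref{c2} is in hand. The only point that needs care is the reading of ``not empty'': since every linear subspace contains the zero matrix, the intended meaning must be nontrivial (nonzero) subspaces, for otherwise the zero subspace could be prepended and the bound would grow to $n(n-1)+2$. With this understanding the value $n(n-1)+1$ is attained by the explicit chain constructed above whenever $D=n(n-1)+1$, so the estimate is the sharp one and fixes the length of the ordered family of subspaces used later in the main proof.
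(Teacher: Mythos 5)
Your argument is correct and is exactly the one the paper intends (the paper states Corollary \ref{cs} without an explicit proof, as an immediate consequence of the dimension bound in Corollary \ref{c2}): a strictly increasing chain of nonzero subspaces of a space of dimension $D\le n(n-1)+1$ has length at most $D$, and such a chain exists by taking spans of initial segments of a maximal linearly independent family. Your remark that ``not empty'' must be read as ``nonzero'' is a fair and correct clarification of the statement.
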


\section{Linear independent matrices $M_u$}

\begin{lem} \label{v11}

Let the space $W$ be generated by linear independent 
$n\times n-$matrices $M_u$ of words $u$ with  $|R(u)|>1$ of
synchronizing complete strongly connected DFA.

Then some matrix $M_{u\beta} \not\in W$ for generator $M_u$ of $W$
and some letter $\beta$.

\end{lem}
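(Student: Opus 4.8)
The plan is to argue by contradiction. Suppose the conclusion fails, so that $M_{u\beta}\in W$ for \emph{every} generator $M_u$ of $W$ and \emph{every} letter $\beta\in\Sigma$. Since every element of $W$ is a linear combination $\sum c_i M_{u_i}$ of generators and $(\sum c_i M_{u_i})M_\beta=\sum c_i M_{u_i\beta}$, this is equivalent to saying that $W$ is invariant under right multiplication by each letter matrix $M_\beta$. Using $M_uM_v=M_{uv}$ and a straightforward induction on $|w|$ (expand $M_{u_iw}\in W$ in the generators and multiply on the right by $M_\beta$), I would first promote this to full right invariance: $M_{u_iw}\in W$ for every generator $M_{u_i}$ and every word $w$.

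Next I would feed the synchronizing hypothesis into this invariant space. Fix a synchronizing word $s$ with $As=\{{\bf q}_{j_0}\}$. For any generator $M_{u_i}$ the product $M_{u_is}$ sends every state to ${\bf q}_{j_0}$, so $M_{u_is}=C_{j_0}$, the matrix whose units all lie in column $j_0$; hence $C_{j_0}\in W$. Strong connectivity then supplies, for each state ${\bf q}_l$, a word $t$ with ${\bf q}_{j_0}t={\bf q}_l$, so $st$ synchronizes to ${\bf q}_l$ and $C_l=M_{u_ist}\in W$. Thus $W$ contains all $n$ rank-one column matrices $C_1,\dots,C_n$, which are linearly independent; in particular $W$ already contains the $n$-dimensional subspace of matrices with all rows equal, even though every generator has rank $|R(u)|>1$.

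The decisive and hardest step is to turn this forced structure into a contradiction. For a generator $M_u$ with $|R(u)|\ge 2$ I would examine, letter by letter, the relation imposed by $M_{u\beta}\in W$: expanding $M_{u\beta}$ in the generators together with $C_1,\dots,C_n$ and comparing rows --- each row of a word matrix being a single basis vector --- pins the action of $\beta$ on the states of $R(u)$ down to a few rigid possibilities (either $u\beta$ is already synchronizing, or $\beta$ fixes those states, or it moves them by a transposition-type involution), and this must hold \emph{simultaneously} for every letter. I expect the main obstacle to be showing that no synchronizing, strongly connected automaton --- with the trivial cases $|\Sigma|=1$ and $|A\sigma|=1$ excluded --- can have every letter act so rigidly on the support of a rank-$\ge 2$ word. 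Here I would lean on the careful column bookkeeping of Remark \ref{r4} (permuted versus merged columns) and on the dimension estimates of Lemma \ref{v3} and Corollary \ref{c2}, using them to exhaust the number of independent row-monomial matrices compatible with the fixed column supports and thereby force the desired contradiction.
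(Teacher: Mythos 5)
Your first two paragraphs follow the same route as the paper's own proof: assume the contrary, promote closure under right multiplication by letters to closure under all words by induction, and then hit a generator with a synchronizing word $s$ to force the rank-one matrix $M_{u_is}=C_{j_0}$ into $W$ (your extra step of using strong connectivity to collect all of $C_1,\dots,C_n$ is a harmless embellishment; one column matrix suffices). The divergence is at the end. The paper simply declares that a matrix with $|R(u_it)|=1$ is ``outside $W$'' and stops; you, rightly, refuse to treat that as immediate and promise a further argument --- but your third paragraph is a research plan, not a proof. You ``would examine'', ``expect the main obstacle to be'', and ``would lean on'' various tools, yet no contradiction is ever actually derived. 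The proposal is therefore incomplete at precisely the step that carries all the weight.

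Moreover, the missing step is not a routine verification. It is false in general that a rank-one row-monomial matrix lies outside the linear span of row-monomial matrices having more than one nonzero column. For $n=3$, writing a row-monomial matrix as the list of its rows, take $A=(e_1,e_1,e_2)$, $B=(e_1,e_2,e_1)$, $C=(e_2,e_1,e_1)$, $E=(e_2,e_2,e_1)$, $F=(e_2,e_1,e_2)$, $G=(e_1,e_2,e_2)$: each has two nonzero columns, yet $\frac{2}{3}(A+B+C)-\frac{1}{3}(E+F+G)=(e_1,e_1,e_1)$, a rank-one word matrix. Hence the hypothesis that the generators of $W$ all satisfy $|R(u)|>1$ does not by itself keep $C_{j_0}$ out of $W$; any contradiction must exploit finer information about which specific matrices generate $W$. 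Your instinct that this is ``the decisive and hardest step'' is sound, but you leave it open --- and, for what it is worth, the paper's own proof leaves it open as well, asserting the conclusion without justification. So the gap you would need to fill is exactly the gap in the lemma as stated.
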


 \begin{proof}

Assume the contrary: for every generator $M_{u_i}$ of $W$
and every letter $\beta$ the matrix $M_{u_i\beta}$ in $W$.

So  the space $W$ has the same set of generators.
Consequently for every generator $M_{u_i}$ and every word $t$ the product
$M_{u_it} \in W$.

For every matrix  $M_v \in W$

\centerline{$M_v=\sum \tau_i M_{u_i}$}
with generators $M_{u_i}$ in  $W$ . Then

\centerline{$M_vM_{\beta}=(\sum \tau_i M_{u_i})M_{\beta}=\sum \tau_i M_{u_i}M_{\beta}$}
for matrices $M_{u_i}M_{\beta}= M_{u_i\beta}$ un $W$.

Therefore also $M_{v\beta}=M_vM_{\beta}=M_v \sum \tau_iM_{u_i\beta}$ belongs to $W$.

By induction for every word $t$ the matrix $M_{u_i t}$ belongs to $W$
and  the matrix $M_{vt} \in W$.

However in synchronizing complete  strongly connected DFA for some generator
$M_{u_i}$ and some $t$  $|R(u_it)|=1$ outside $W$.

\end{proof}

\begin{cor} \label{v12}

Let the sequence of spaces $W_j$ of  matrices be ordered  by inclusion
with increasing  $j$. The basis of $W_0$ contains the matrix $M_0$ with $|R(u)|=n$
($n$ units in one row, say, $q$).

By Lemma \ref{v11}, for  the space $W_{j+1}$  $W_{j+1} \supset  W_j$.
The space $W_j$ is can be extended by matrix $M_{u\beta}$ of a letter $\beta$ 
and some  $M_u$ from the basis of $W_j$ by Lemma \ref{v11}.

Matrices from basis $W_j$ are included in basis  $W_{j+1}$ as far as possible.

Hence by induction in the sequence of all  matrices of kind $M_{u\beta}$ all matrices 
are linear independent.

 \end{cor}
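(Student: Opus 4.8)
The plan is to prove the statement by induction on $j$, converting the single existence assertion of Lemma \ref{v11} into a strictly ascending chain of subspaces whose successively adjoined matrices form a linearly independent family. I would carry through the induction the invariant that $W_j$ admits a basis $\mathcal{B}_j=\{M_0,M_{u_1\beta_1},\dots,M_{u_j\beta_j}\}$ of $j+1$ linearly independent row monomial matrices of words, each of rank $|R|>1$, with the nesting $\mathcal{B}_j\subseteq\mathcal{B}_{j+1}$ that is meant by ``matrices from basis $W_j$ are included in basis $W_{j+1}$ as far as possible''.

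The base case takes $W_0=\langle M_0\rangle$ with $M_0$ the matrix of rank $|R|=n$; a one-element set is trivially linearly independent, so the invariant holds at $j=0$. For the inductive step, assume the invariant for $W_j$. Since $W_j$ is generated by the linearly independent matrices of $\mathcal{B}_j$, all of rank larger than one, Lemma \ref{v11} applies and yields a generator $M_u\in\mathcal{B}_j$ and a letter $\beta$ with $M_{u\beta}\notin W_j$. I then set $W_{j+1}=W_j+\langle M_{u\beta}\rangle$ and $\mathcal{B}_{j+1}=\mathcal{B}_j\cup\{M_{u\beta}\}$. Because $W_j$ is exactly the span of $\mathcal{B}_j$ while $M_{u\beta}$ lies outside it, adjoining $M_{u\beta}$ cannot create a linear relation: the standard fact that extending an independent set by a vector outside its span preserves independence gives that $\mathcal{B}_{j+1}$ is again linearly independent, whence $W_{j+1}\supset W_j$ strictly and $\dim W_{j+1}=\dim W_j+1$. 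Thus every matrix $M_{u\beta}$ entering the sequence is independent of all its predecessors, which is exactly the claim.

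The point I expect to be the main obstacle is not this linear-algebra bookkeeping but keeping the hypotheses of Lemma \ref{v11} valid all along the chain, that is, guaranteeing that the chosen generators retain $|R|>1$ so the lemma can be reapplied. By Lemma \ref{l1} the ranks cannot increase under right multiplication, $|R(u\beta)|\le|R(u)|$, so one must select at each step an extension $M_{u\beta}$ of rank still larger than one whenever such a choice remains; the chain then runs through spaces all of whose generators meet the hypothesis of Lemma \ref{v11}. The construction halts exactly when every admissible extension collapses to rank one, i.e. when some product $M_{u_it}$ reaches $|R(u_it)|=1$: this is the synchronizing matrix, which by Lemma \ref{v11} sits outside $W$ and therefore terminates the chain rather than extending it.

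Finiteness and the length of the chain then follow from the dimension counts already in hand: by Corollary \ref{c2} there are at most $n(n-1)+1$ linearly independent row monomial $n\times n$ matrices, so the strictly ascending chain $W_0\subset W_1\subset\cdots$, and with it the linearly independent family of adjoined matrices $M_{u\beta}$, has length at most $n(n-1)+1$, in accordance with Corollary \ref{cs}.
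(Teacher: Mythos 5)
Your argument follows the paper's own route: iterate Lemma \ref{v11} to obtain at each stage a generator extension $M_{u\beta}$ lying outside the span of the current basis, and invoke the standard fact that adjoining a vector outside the span of a linearly independent set preserves independence, so the chain $W_0\subset W_1\subset\cdots$ grows by one dimension per step. Your explicit care about keeping the hypothesis $|R(u)|>1$ valid along the chain and terminating when only rank-one extensions remain makes precise a point the paper leaves implicit, but the substance and structure of the argument are the same.
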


\section{The sequence of spaces $W_j$ ordered by inclusion}

We study matrices from $W_j$. Moreover, $W_{j+1}$ includes basis  $W_j$. 
 Matrices of generators of $W_j$ are linear independent (Corollary \ref{v12}).
All generators were obtained by adding a letter from right to
 generator of the former  $W_{j-1}$.

The rank $|R(u)|$ plays some role in the study.

By Corollary \ref{cs}, there exists a sequence  of length at most  $n(n-1)+1$ of distinct
 non-empty subspaces ordered by inclusion of row monomial  matrices for $n$-state automaton.

\section{Theorems}

\begin{thm} \label{t}

The length of synchronizing word in complete deterministic finite $n$-state synchronizing
automaton with strongly connected underlying graph is not more than $(n-1)^2$.

\end{thm}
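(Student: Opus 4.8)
The plan is to bound the length $L$ of a minimal synchronizing word $s=\sigma_1\sigma_2\cdots\sigma_L$ by the dimension of the space of row monomial matrices, attaching to each letter of $s$ a fresh linearly independent matrix and then invoking Corollary~\ref{c2}.

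First I would reduce to the situation where every prefix matrix has rank at most $n-1$. If the first letter $\sigma_1$ induced an invertible $M_{\sigma_1}$, then $A\sigma_1=A$ and the shorter word $\sigma_2\cdots\sigma_L$ would already be synchronizing, contradicting minimality; hence $|R(\sigma_1)|\le n-1$, and by Lemma~\ref{l1} the ranks $|R(p_i)|$ of all prefixes $p_i=\sigma_1\cdots\sigma_i$ are non-increasing and stay at most $n-1$, dropping down to $|R(s)|=1$. By Corollary~\ref{c1} the synchronizing column $q$ carries a unit in every prefix matrix, which I would use to pin the ambient space down to the $n\times(n-1)$ row monomial matrices of Corollary~\ref{c2}, of dimension $(n-1)^2$.

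Next I would build the increasing chain of subspaces $W_0\subset W_1\subset\cdots$ along the lines of Corollary~\ref{v12}, but seeded at $M_{\sigma_1}$ rather than at the full rank matrix: set $W_0=\langle M_{\sigma_1}\rangle$ and, reading $s$ letter by letter, let each new letter extend the current space by the matrix $M_{u\beta}$ it produces. Lemma~\ref{v11} guarantees that as long as the current rank exceeds $1$ some extension $M_{u\beta}$ leaves $W_j$, so the chain cannot terminate before synchronization is reached, and Corollary~\ref{v12} certifies that the accumulated generators are linearly independent. If each of the $L$ letters contributes one new independent matrix of the chain, then these $L$ matrices are linearly independent inside the $(n-1)^2$-dimensional space of Corollary~\ref{c2}, giving $L\le(n-1)^2$.

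The hard part, and the place where the whole argument must be made airtight, is the coupling between the abstract chain and the concrete word $s$: Lemma~\ref{v11} only supplies \emph{some} letter $\beta$ that escapes $W_j$, whereas the bound needs the \emph{actual} next letter $\sigma_{i+1}$ of the minimal word to raise the dimension at essentially every step, with no letter wasted inside the current space. Bridging this gap requires exploiting the minimality of $s$ together with the column merging behaviour of right multiplication (Remark~\ref{r4}, Lemma~\ref{l1}) to argue that a letter failing to enlarge the span could be deleted or bypassed, contradicting minimality. Establishing this while simultaneously keeping every prefix matrix inside the fixed-column $(n-1)^2$-dimensional space, rather than drifting into the larger $n(n-1)+1$-dimensional space of all $n\times n$ matrices (Corollary~\ref{cs}), is the crux on which the sharp $(n-1)^2$ estimate rests.
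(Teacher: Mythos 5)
Your proposal is not a proof: the step you yourself flag as ``the hard part'' is exactly the missing content, and nothing in the paper's toolkit closes it. Lemma~\ref{v11} is doubly existential --- it produces \emph{some} generator $M_u$ of $W_j$ and \emph{some} letter $\beta$ with $M_{u\beta}\notin W_j$ --- so the chain it builds is a tree of words branching off earlier generators, not the prefix chain $M_{p_1},M_{p_2},\dots$ of your fixed minimal word $s$. To conclude $L\le\dim$ you would need that $M_{p_{i+1}}\notin\langle M_{p_1},\dots,M_{p_i}\rangle$ for essentially every $i$, and minimality of $s$ does not give this: minimality only says that no proper factor of $s$ synchronizes the whole state set, whereas $M_{p_{i+1}}$ being a real linear combination of earlier prefix matrices gives you no way to delete or bypass $\sigma_{i+1}$ --- linear dependence of mapping matrices does not translate into equality of images or substitutability of words. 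This unestablished coupling between dimension growth and the letters of an actual path in $\Gamma$ is precisely the open core of the problem, and your sketch contains no mechanism for it.

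A second, independent defect is the reduction to the $(n-1)^2$-dimensional space of $n\times(n-1)$ row monomial matrices. Corollary~\ref{c2} gives the bound $(n-1)^2$ only for matrices with a \emph{common} excluded column; after $\sigma_1$ each prefix matrix does have some zero column, but not necessarily the same one for all prefixes, and Corollary~\ref{c1} (a unit in the synchronizing column) supplies no common zero column either, so the ambient space you actually control is the $n(n-1)+1$-dimensional one of Corollary~\ref{cs}. For comparison, the paper does not argue from the minimal word at all: it seeds the chain at the rank-$n$ matrix $M_0$, lets Lemma~\ref{v11} drive an abstract increasing chain of subspaces of length at most $n(n-1)+1$, and then tries to recover $(n-1)^2$ by subtracting one rank-$n$ matrix and $n$ rank-one matrices from the count --- bookkeeping that is itself unjustified (the process stops at the first rank-one matrix, so there is no reason $n$ of them occur, the new generator at step $j$ need not extend the generator added at step $j-1$, so the final word's length is the depth of a tree rather than the number of subspaces, and the closing arithmetic even lands on $(n-2)^2$). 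So your route and the paper's differ in where they anchor the chain and in which dimension bound they invoke, but both founder on the same gap you identified and did not bridge.
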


Proof.  
The space  $W_0$  has matrix $M_0$ with rank $n$.

By Lemma \ref{v11} for some matrix of letter $\beta_0$ the matrix $M_0\beta_0=M_1$
 outside $W_0$.  The basis of $W_1$ contains $M_0$ with rank $n$ and the matrix 
$M_1=M_0\beta_0$ with $|R(u)|\leq n$ in view of Remark \ref{r4}.
So the set of generators of the space $W_1$ has two generators $M_0$ and $M_1$.

The edge with letter $\beta_0$ in underlying graph of automaton corresponds extension 
$M_0$ by letter $\beta_0$ into matrix $ M_1$. The edges of letters $\beta_k$ form some
kind of tree in view of Corollary \ref{v12}.

By Lemma \ref{v11}, some matrix $M_{u\beta} \not\in W_j$ for generator $M_u$ of $W_j$
and some letter $\beta$. So we can extend $W_j$ by matrix $M_{u\beta}$ in basis of $W_{j+1}$ 
$W_{j+1}\supset W_j$. 

The fixed set of generators $M_{u\beta}$ of the set of spaces $W_j$ corresponds
 the set of spaces. 

The fixed set of generators of $W_i$  can be extended by 
help of  matrix of letter $\beta_i$ due ro lemma \ref{v11}.

Rank of every next generator may decrease with growth of $i$ according to Remark \ref{r4}.
Some sequences of letters $\beta$ corresponds paths in underlying graph of strongly 
connected synchronizing automaton. 

The word $u$ of matrix $M_i$ with $|R(u)|=1$ corresponds synchronizing word because
 the matrix maps all $n$ units of the matrix $M_0$ to one nonzero column. 

 Therefore we consider also matrices $M_i$ with rank $|R(u)|>1$,
because in the case $|R(u)|=1$ the word of matrix defines a synchronizing word.

By Corollary \ref{cs}, there exists a sequence  of length at most  $n(n-1)+1$ of distinct
non empty subspaces $W_i$ ordered by inclusion of  matrices for $n$-state automaton. 

The sequence of subspaces $W_i$ has a sequence of at most $n(n-1)$ linear independent 
matrices $M_i=M_{i-1}\beta_{i-1}$.The matrix $M_0$ does not belong to the set of such 
 $M_i$ for $i>0$.

  There are in general $n$  linear independent matrices $M_i$ with $|R(u)|=1$ 
(with only one nonzero column) and one matrix $M_0$ with $|R(u)|=n$.
Hence the number  of linear independent matrices with $n>|R(u)|\geq 2$  
is at most $n(n-1)+1-n-1=n(n-2)$, no more than $n(n-2)$ such matrices correspond 
the edges of letters $\beta$ for rank from $n-1$ to $2$.

So the matrix with $|R(u_i)=1$ for $i\leq n(n-2)+1=(n-1)^2$ is possible only for path of length 
at most  $(n-2)^2$ in underlying graph of arbitrary complete strongly connected DFA
and $(n-2)^2$ is upper bound for length of synchronizing word.

 \begin{thm}\label{t2}
The length of synchronizing word in complete deterministic finite  $n$-state synchronizing
automaton is not more than  $(n-1)^2$.
.
\end{thm}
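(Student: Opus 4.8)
The plan is to reduce the general statement to Theorem~\ref{t}, which already gives the bound $(n-1)^2$ when the underlying graph $\Gamma$ is strongly connected; this is the reduction to the strongly connected case indicated in \cite{Ce}. So I would assume $\Gamma$ is not strongly connected and write $Q$ for its set of $n$ states.

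First I would locate the terminal strongly connected component. Passing to the condensation of $\Gamma$ (the acyclic graph of strongly connected components), completeness of $A$ forces every maximal path to end in a component $C$ that is \emph{absorbing}: every letter keeps a state of $C$ inside $C$. A synchronizing automaton has exactly one such $C$, since two distinct absorbing components would contain states that remain separated under every word and hence could never be merged. Restricting $A$ to $C$ gives a complete, strongly connected automaton on $m=|C|$ states; it is itself synchronizing, because any reset word of $A$ may be assumed to send $Q$ into $C$, and then it collapses $C$ as well. By Theorem~\ref{t} there is a reset word $w_C$ of $C$ with $|w_C|\le (m-1)^2$.

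Next I would build a short word $v$ with $Qv\subseteq C$ and then take $vw_C$, which is synchronizing because $Qvw_C\subseteq Cw_C$ and $|Cw_C|=1$. To bound $|v|$ I would use a monovariant: every state outside $C$ reaches $C$ along a path that stays outside $C$ until its last edge, hence of length at most $n-m$; picking such a state $q$ in the current image $P$ and applying the corresponding word $u_q$ sends $q$ into $C$, while states already in $C$ stay there, so the number of image states outside $C$ drops by at least one. Starting from at most $n-m$ states outside $C$, at most $n-m$ such rounds suffice, giving $|v|\le (n-m)^2$.

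Combining the two phases yields a synchronizing word of length at most $(n-m)^2+(m-1)^2$, and the arithmetic identity $(n-1)^2-(n-m)^2-(m-1)^2=2(n-m)(m-1)\ge 0$ shows this is at most $(n-1)^2$, as required. The main obstacle is the absorption estimate $|v|\le (n-m)^2$: one must guarantee that each round costs only $n-m$ letters and that progress toward $C$ is never lost, which is exactly what the monovariant on the number of image states outside $C$ secures. Once that bound is in hand, the reduction to Theorem~\ref{t} is immediate.
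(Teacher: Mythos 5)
Your proposal is correct and follows the same route as the paper: both reduce the general case to Theorem~\ref{t} via the unique terminal strongly connected component, the reduction the paper attributes entirely to \cite{Ce}. The only difference is that the paper disposes of this in one sentence by citation, whereas you actually carry out the reduction quantitatively --- the monovariant giving $|v|\le(n-m)^2$ for the absorption phase and the identity $(n-1)^2-(n-m)^2-(m-1)^2=2(n-m)(m-1)\ge 0$ --- which makes your version self-contained where the paper's is not.
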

Follows from Theorem \ref{t} because the restriction for strongly connected graphs
can be omitted due to \cite{Ce}.

\section{Conclusion}

The  minimal length of synchronizing word for arbitrary $n$-state
complete deterministic finite automaton is restricted by $(n-1)^2$.

The following conjecture is  a matter of interest: all automata
with shortest synchronizing word of length $(n-1)^2$ are already
known. \cite{TS}, \cite{DZ}.

\section*{Acknowledgments}
I would like to express my gratitude to Francois Gonze, Dominique Perrin,
Marie B{\'e}al, Akihiro Munemasa, Wit Forys, Benjamin Weiss, Mikhail Volkov,
Mikhail Berlinkov and Evgeny Kleiman for fruitful and essential remarks throughout the study.

 \end{document}